\DeclarePairedDelimiter\floor{\lfloor}{\rfloor}
\newcommand{\BigO}[1]{\ensuremath{\operatorname{O}\bigl(#1\bigr)}}
\newtheorem{theorem}{Theorem}[section]
\newtheorem{lemma}[theorem]{Lemma}
\newtheorem{definition}[theorem]{Definition}
\newtheorem{notation}[theorem]{Notation}
\newtheorem{remark}[theorem]{Remark}
\begin{document}


\title{The Takeoff Towards Optimal Sorting Networks}
\author{Martin Marinov\\ David Gregg}
\maketitle




\begin{abstract}

A complete set of filters $F_n$ for the optimal-depth $n$-input sorting network problem is such that if there exists an $n$-input sorting network of depth $d$ then there exists one of the form $C \oplus C'$ for some $C \in F_n$. Previous work on the topic presents a method for finding complete set of filters $R_{n, 1}$ and $R_{n, 2}$ that consists only of networks of depths one and two respectively, whose outputs are minimal and representative up to permutation and reflection. Our main contribution is a practical approach for finding a complete set of filters $R_{n, 3}$ containing only networks of depth three whose outputs are minimal and representative up to permutation and reflection. In previous work, we have developed a highly efficient algorithm for finding extremal sets ( i.e. outputs of comparator networks; itemsets; ) up to permutation. In this paper we present a modification to this algorithm that identifies the representative itemsets up to permutation and reflection. Hence, the presented practical approach is the successful combination of known theory and practice that we apply to the domain of sorting networks. For $n < 17$, we empirically compute the complete set of filters $R_{n, 2}$, $R_{n, 3}$, $R_{n, 2} \upharpoonright w $ and $R_{n, 3}^w$ of the representative minimal up to permutation and reflection $n$-input networks, where all but $R_{n, 2}$ are novel to this work.

\end{abstract}

%

\section{Introduction}




A sorting network is an abstract mathematical model designed to sort numbers in a predetermined sequence of comparators. A sorting network consists of $n$ wires and comparators between pairs of wires such that any input of $n$ numbers is sorted by the network, where one wire corresponds to one number. The two most common measures of sorting networks are the total number of comparators --- \emph{Bose-Nelson's sorting problem~\cite{Bose:1962:SP}} --- and the number of network levels, also referred to as depth. In this paper, we extend the knowledge on the problem of empirically searching for sorting networks of minimal depth by significantly reducing the number of candidate network levels of depth three that need to be considered by any algorithm. For example for $n = 11$, $R_{11, 3}$ contains about $10$ thousand comparator networks in comparison to existing methods that would consider roughly $1.7$ million eleven-input comparator networks of depth three, that is about $169$ times fewer networks to consider in comparison to applying all levels.

\section{Related Work and Contributions}

Knuth \cite{Knuth73} showed the optimal depth sorting networks for all $n \leq 8$. He also presents the zero-one principle of sorting networks which states that if a comparator network sorts all $2^n$ binary strings of length $n$ then it is a sorting network.

Parberry~\cite{Parberry89} presented a computer assisted proof for the minimal depth of a nine and ten-input sorting networks. He significantly reduced network level candidates for the first two levels, in comparison to the naive approach, by exploiting symmetries of the networks (referred to as first and second normal form \cite{Parberry89}).

Bundala~\cite{BundalaCCSZ14_Optimal_Depth} presented a computer assisted proof for the optimal depths of networks with eleven to sixteen (inclusive) inputs. He also managed to significantly reduce the number of candidates for the second layer in comparison to Parberry's approach, by considering only networks whose outputs are minimal representative up to permutation and reflection. Similar work for the second level is also presented by Michael Codish in \cite{CodishCS14:Two_Layer_Prefix}. Bundala's algorithm for finding sorting networks of optimal depth is based on a SAT encoding of the optimal depth sorting networks problem, which uses the set of candidate two-layer networks as a fixed entry point. Some extra pruning techniques are presented and they use a state of the art ``off the shelf'' SAT solver to find the optimal depth sorting networks for all $n \leq 16$. The work presented in this paper can be used to speedup Bundala's algorithm substantially, as described in our experiments section~\ref{sec:experiments}.

We~\cite{Marinov:ExtremalSets:Permutation} presented a highly efficient practical algorithm for finding the minimal representative itemsets over a domain $D$ up to a permutation of $D$. This algorithm can be applied to reduce the number of candidates for the second layer as described in Bundala and Codish, although \cite{BundalaCCSZ14_Optimal_Depth} and \cite{CodishCS14:Two_Layer_Prefix} present an extra pruning method using reflection.

\subsection{ Problem Statement}

The problem addressed in this paper is that of reducing the candidate networks of depths two and three that need to be considered when searching for sorting networks of optimal depth.

\subsection{Motivation}

The importance of finding the $R_{n,3}$ and $R_{n,3} \upharpoonright w$ set of filters is easily seen from Bundala's algorithm for finding sorting networks of optimal depth and is explicitly stated in the future work section of \cite{BundalaCCSZ14_Optimal_Depth}. Bundala's algorithm can be easily adapted to use prefixes of exactly three layers as an entry point to the SAT encoding of the problem, given that the presented algorithm \cite{BundalaCCSZ14_Optimal_Depth} uses exactly two layers as the entry point. Hence, such a reduction of the search space would result in a faster such SAT-solver-based algorithm for finding sorting networks of optimal depth.

It is important to mention that the techniques described by Bundala~\cite{BundalaCCSZ14_Optimal_Depth} and Codish~\cite{CodishCS14:Two_Layer_Prefix} for reducing the number of candidate networks of depth two cannot be easily extended for networks of depth three because they are based on finite case studies/identification around the second level. In other words, they provide a regular expression for generating the networks of depth two whose outputs are minimal and representative up to permutation and reflection. Identifying all of the cases to derive a regular expression for the third level is a possible solution to the problem but could prove an immensity hard one.

Moreover, using our approach we manage to further reduce the required set of two-layered networks that needs to be considered when finding optimal depth sorting networks. Since Bundala showed that is a sufficient to find a subset of all of the $2^n$ possible inputs for which no $n$-input sorting network of depth $d$ exists that sorts all of them when proving depth optimality. Intuitively, the more we reduce the initial input set, the lower the number of minimal representative up to permutation and reflection networks of depth two is enough to be considered. We present experiments showing the achieved reduction for all $n$-input two layered networks for $n < 17$.

\subsection{Contributions}

\begin{itemize}
	\item \emph{Modified algorithm for finding minimal itemsets up to permutation and reflection} --- we took an existing algorithm for finding minimal itemsets up to permutation which a dataset (a collection of itemsets). We present a modification which is linear (in terms of the number of itemsets) in time and space to find the ones which are minimal up to permutation and reflection.
	
	\item \emph{Empirically find $R_{n, 2} \upharpoonright w $ for all $n < 17$} --- this is a direct improvement of Bundala's technique of not considering all of the $2^n$ inputs to determine that no $n$-input sorting network of depth $d$ exists. We take one step further to find the minimal up to permutation and reflection itemsets after applying the input set reduction (described in Experiment~3~\cite{BundalaCCSZ14_Optimal_Depth}).
	
	\item \emph{Empirically find $R_{n, 3}$ and $R_{n, 3}^w$ for all $n < 14$} --- we experimentally evaluated the modified algorithm to find the three layered $n$-input comparator networks whose outputs (itemsets) are minimal up to permutation and reflection. The set $R_{n, 3}$ is generated by applying all network levels to $R_{n, 2}$ and then finding the minimal up to permutation and reflection ones, whereas $R_{n, 3}^w$ is derived by applying all levels to all itemsets in the set $R_{n, 2} \upharpoonright w $  and then reducing.
	
\end{itemize}

\section{Background on Sorting Networks}

\subsection{ Formal Definition of Comparator and Sorting Networks }

\begin{definition}
\label{ComparatorNetworkDefinition}
A \emph{generalized comparator} is an ordered pair $\langle i, j \rangle$ such that $1 \leq i \neq j \leq n$. A generalized comparator is a \emph{comparator} or \emph{min-max comparator} if $i < j$. The values $i$ and $j$ are referred to as \emph{channels}.
A \emph{generalized level} $L$ is a set of generalized comparators such that each channel is involved in at most one generalized comparator, formally if $\langle a, b \rangle, \langle c, d \rangle \in L$ then $\lvert \{ a, b, c, d \} \lvert = 4 $. A generalized level is a \emph{level} or \emph{min-max level} if it consists only of (min-max) comparators. The set of all (min-max) levels is denoted as $G_n$, as described by Bundala~\cite{BundalaCCSZ14_Optimal_Depth}.
A \emph{generalized $n$-input comparator network} is a vector $\langle L_1, L_2, \dots, L_d, n \rangle $, where $L_1, L_2, \dots, L_d$ are generalized levels, and $n$ is a positive integer. A generalized $n$-input comparator network is called an \emph{$n$-input comparator network} if it consists only of (min-max) levels. Let $C = \langle L_1, L_2, \dots, L_d, n \rangle$ be an $n$-input comparator network, we define the size of $C$ as the number of levels, i.e. $|C| = d$.
\end{definition}

So far we have formally defined the structure of a (generalized) comparator network. We need to define the \emph{output} of applying a comparator network to an \emph{input}, where an input is an n-bit binary string \cite{Knuth73}. Applying a network to an input permutes the input vector. Hence, for any fixed input we can define a permutation that models the network behaviour when applied to that particular input.

\begin{notation}
\label{PermutationDefinition}
Denote the set of all permutations of n elements as $\Pi_n = \{ \pi : \{1, 2, \dots, n\} \longmapsto \{1, 2, \dots, n\}$ $\lvert$ $\pi$ is bijective $\}$. Let $v = \langle a_1, a_2, \dots \rangle$ be a vector. Denote by $v_i$ the $i$-th coordinate of $v$, namely $v_i = a_i$.
\end{notation}

\begin{definition}
\label{def:ComparatorNetworkEvaluation}
An \emph{input} is a vector $\mathbf{x} \in \{0,1\}^n$ as per Knuth's~\cite{Knuth73} zero-one principle. Denote by $I_n$ the set of all inputs. The evaluation of a generalized $n$-input comparator network $C = \langle L_1, \dots, L_d, n \rangle $ in channel $i$ at level $k$ on input $x$ is the two dimensional vector $e_x(i, k)$ where:

\[
 e_x(i, k) =
  \begin{cases} 
	\langle x_i, i \rangle		& \text{ $ k = 0 $}											\\
    e_x(i, k-1)		& \text{ $ \langle i, j \rangle \in L_k $ and $ e_x(i, k-1)_1 <= e_x(j, k-1)_1 $}	\\
    e_x(j, k-1)		& \text{ $ \langle i, j \rangle \in L_k $ and $ e_x(i, k-1)_1 >  e_x(j, k-1)_1  $}	\\
    e_x(i, k-1)		& \text{ $ \langle j, i \rangle \in L_k $ and $ e_x(i, k-1)_1 >= e_x(j, k-1)_1 $}	\\
    e_x(j, k-1)		& \text{ $ \langle j, i \rangle \in L_k $ and $ e_x(i, k-1)_1 <  e_x(j, k-1)_1$}	\\
	e_x(i, k-1)		& \text{ otherwise }										\\
  \end{cases}
\]

The output of applying $C$ to $x$ is \emph{$V_C(x) = \langle e_x(1, d)_1, \dots, e_x(n, d)_1 \rangle \in I_n$}. The permutation of the coordinates when applying $C$ to  $x$ is \emph{$P_C(x) = \langle e_x(1, d)_2, \dots , e_x(n, d)_2 \rangle \in \Pi_n $}.
\end{definition}

Intuitively, we say that a vector in $I_n$ is sorted if its values are non-decreasing left-to-right, and a sorting network is one which sorts all possible $2^n$ input vectors. More formally:

\begin{definition}
\label{SortingNetworkDefinition}
The vector $\langle x_1, x_2, \dots, x_n \rangle \in \{0,1\}^n$ is \emph{sorted} iff $x_i <= x_{i + 1}$ for all $1 \leq i < n$.
A \emph{generalized sorting network} is a generalized $n$-input comparator network for which there exists a permutation $\pi \in \Pi_n$ such that $\pi( V_C(x) )$ is sorted for all inputs $ x \in I_n$.
A \emph{sorting network} is an $n$-input comparator network such that $ V_C( x ) $ is sorted for all inputs $ x \in I_n$.
\end{definition}

\begin{theorem}
\label{FloydKnuthGeneralizedTheorem}
For every generalized sorting network there is a sorting network with the same size and depth. If the former has only min-max comparators in the first k levels, then the latter is identical in the first k levels.
\end{theorem}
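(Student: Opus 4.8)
The plan is to prove the statement in two stages: first an \emph{untangling} stage that removes every reversed (non-min-max) comparator without changing the size or depth, producing a genuine $n$-input comparator network whose output is a fixed permutation of the original output; and then a short \emph{sortedness} stage that upgrades ``sorts up to a permutation'' to ``sorts'' for the resulting min-max network. Throughout I work directly with the binary inputs $I_n$ of Definition~\ref{def:ComparatorNetworkEvaluation}, and I repeatedly use that every (generalized) comparator merely redistributes the pair $(\min,\max)$ of the two incoming values, so $V_C$ always preserves the number of $1$'s of its input.

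For the untangling stage I process the levels $L_1, L_2, \dots, L_d$ from left to right, maintaining a permutation $\sigma \in \Pi_n$ (initially the identity) that records a relabelling of channels to be applied to all \emph{not-yet-processed} levels. On reaching a level I first relabel its comparators by $\sigma$; if the relabelled level still contains a reversed comparator $\langle i, j\rangle$ with $i > j$, I replace it by the min-max comparator $\langle j, i\rangle$ and update $\sigma$ by composing it with the transposition $(i\,j)$. Replacing $\langle i,j\rangle$ by $\langle j,i\rangle$ swaps the values that leave on channels $i$ and $j$, and propagating $(i\,j)$ through the remaining levels exactly compensates for that swap; hence the finished network $N' = \langle L_1', \dots, L_d', n\rangle$ has only min-max levels, the same number of comparators on each level (so the same size and depth), and satisfies $V_{N'}(x) = \tau\bigl(V_N(x)\bigr)$ for all $x \in I_n$, where $\tau \in \Pi_n$ is the product of all transpositions introduced.

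The sortedness stage is then quick. Because $N$ is a generalized sorting network there is a fixed $\pi \in \Pi_n$ with $\pi\bigl(V_N(x)\bigr)$ sorted for every $x$; since $V_N$ preserves the number of $1$'s, $\pi(V_N(x))$ is forced to be the unique sorted binary vector with that many $1$'s, so $V_N(x)$ --- and therefore $V_{N'}(x) = \tau\bigl(V_N(x)\bigr)$ --- depends only on the number of $1$'s in $x$. On the other hand, a sorted binary vector is a fixed point of every min-max comparator $\langle a, b\rangle$ (if $a<b$ then its $a$-th entry does not exceed its $b$-th entry, so the comparator leaves it unchanged), hence a fixed point of the whole min-max network $N'$. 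Thus for each $w$ the sorted input $0^{\,n-w}1^{\,w}$ yields $V_{N'}(0^{\,n-w}1^{\,w}) = 0^{\,n-w}1^{\,w}$; by the weight-only dependence just established, \emph{every} input with $w$ ones is mapped to this sorted vector, so $N'$ is a sorting network.

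The first-$k$-levels claim falls out of the left-to-right discipline: if $L_1, \dots, L_k$ are already min-max then no reversed comparator is met while processing them and $\sigma$ is still the identity there, so these levels are copied verbatim into $N'$. I expect the main obstacle to be precisely the bookkeeping in the untangling stage: propagating $(i\,j)$ through later levels can turn a previously min-max comparator into a reversed one, so the count of reversed comparators need not decrease monotonically. The resolution is the strictly left-to-right schedule, which touches each level once and only ever pushes the relabelling forward; this guarantees termination after $d$ levels and, because $\sigma$ is never applied to an already-finished level, simultaneously secures both $V_{N'} = \tau \circ V_N$ and the preservation of any min-max prefix.
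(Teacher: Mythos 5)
Your proposal is correct, but it is worth noting how it relates to the paper: the paper does not actually prove this theorem --- its ``proof'' is a one-line citation to Knuth, i.e.\ to the classical Floyd--Knuth conversion of a generalized network to standard form. Your untangling stage is precisely that classical argument (process levels left to right, replace each reversed comparator $\langle i,j\rangle$ by $\langle j,i\rangle$, and push the transposition $(i\,j)$ into the not-yet-processed suffix), and the left-to-right discipline is also exactly what makes the first-$k$-levels claim immediate. What your write-up supplies beyond the citation is the second stage: since the paper defines a generalized sorting network as one that sorts only up to a fixed output permutation $\pi$, one must still argue that the untangled min-max network sorts outright, and your weight argument (outputs of a comparator network depend only on the multiset of input values, hence on the number of ones; the unique sorted vector of each weight is a fixed point of every min-max level) closes that gap cleanly while staying inside the paper's binary-input formalism. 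So your proof buys self-containedness relative to the paper at essentially no extra cost. One place to tighten: the sentence ``propagating $(i\,j)$ through the remaining levels exactly compensates for that swap'' carries the entire inductive burden; it would be better to state the invariant explicitly --- after processing $t$ levels, the new network's state on every input equals $\sigma_t$ applied to the old network's state --- and verify it by induction on $t$, observing that relabelling by a bijection keeps a level a level and that the transpositions introduced within one level act on pairwise disjoint channels and hence commute.
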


\begin{proof}
See Knuth~\cite{Knuth73}.
\end{proof}

\subsection{Known Overlapping Theory}
\label{sec:background:theory:overlap}

The following definitions, lemmas, theorems and proofs presented in this section~\ref{sec:background:theory:overlap} are developed independently of the work by Michael Codish \cite{CodishCS14a_The_End_Game} \cite{CodishCS14:Two_Layer_Prefix} and Daniel Bundala \cite{BundalaCCSZ14_Optimal_Depth}. But since the results described in this section have already been published, we label them as known properties of sorting networks.

\begin{definition}
\label{OutputSetDefinition}
Let the \emph{output set} of a comparator network $C$ be $S_C = \{ V_C(x) \lvert x \in I_n \}$. Let the set of all already sorted inputs $T_n = \{ (x_1, x_2, \dots, x_n)$ $\lvert$  $x_{i < j} = 0, x_{i >= j} = 1$ for $ 1 \leq j \leq n + 1 \}$.
\end{definition}

\begin{definition}
\label{ComparatorNetworkUnionLevelDefinition}
Let $A$ and $B$ be $n$-input comparator networks, where $A = \langle A_1, A_2, \dots, A_d, n \rangle $, $B = \langle B_1, B_2, \dots, B_k, n \rangle $, and let $L$ be a level. Define the concatenations $A \oplus L = \langle A_1, \dots, A_d, L, n \rangle $ and $A \oplus B = A \oplus B_1 \oplus B_2 \oplus \dots \oplus B_k$. Note that $\oplus$ is associative.
\end{definition}

\begin{theorem}
\label{MinimalOutputSetTheorem}
Let $A$, $B$ and $C$ be $n$-input comparator networks. Suppose that $S_{A} \subseteq S_{B}$ and $B \oplus C$ is an $n$-input sorting network. Then there exists a comparator network $C'$ with the same depth as $C$ such that $A \oplus C'$ is an $n$-input sorting network.
\end{theorem}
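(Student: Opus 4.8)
The plan is to show that the network $C$ itself already serves as the witness $C'$, so the existence claim reduces to verifying that $A \oplus C$ is a sorting network; since $C' = C$ trivially has the same depth as $C$, no separate depth bookkeeping is required. The entire argument rests on reinterpreting the sorting condition for a concatenated network as a condition on output sets, after which the hypothesis $S_A \subseteq S_B$ does all the work.

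First I would record the composition identity $V_{X \oplus Y}(x) = V_Y(V_X(x))$ for arbitrary $n$-input comparator networks $X$ and $Y$ and every input $x \in I_n$. This is the technical core. Starting from Definition~\ref{def:ComparatorNetworkEvaluation}, the evaluation $e_x(\cdot, k)$ after the prefix levels of $X$ has first coordinates exactly $V_X(x)$; because the recurrence defining $e_x(i,k)$ at a level depends only on the first coordinates (the values) produced by the previous level, continuing the evaluation through the levels of $Y$ yields the same values as evaluating $Y$ directly on the input $V_X(x)$. A short induction on the number of levels of $Y$ makes this precise, and it is the one place where one must argue carefully from the formal two-coordinate recurrence rather than from intuition.

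With the identity in hand, I would reformulate the hypothesis on $B \oplus C$: by Definition~\ref{SortingNetworkDefinition} it being a sorting network means $V_{B \oplus C}(x) = V_C(V_B(x))$ is sorted for every $x \in I_n$. As $x$ ranges over $I_n$, the values $V_B(x)$ range exactly over $S_B$ by Definition~\ref{OutputSetDefinition}, so this is equivalent to the statement that $V_C(y)$ is sorted for every $y \in S_B$.

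Finally I would invoke $S_A \subseteq S_B$. For any input $x$ we have $V_A(x) \in S_A \subseteq S_B$, hence $V_C(V_A(x))$ is sorted; by the composition identity this is precisely $V_{A \oplus C}(x)$. Thus $V_{A \oplus C}(x)$ is sorted for all $x \in I_n$, so $A \oplus C$ is an $n$-input sorting network, and taking $C' = C$ finishes the argument since it has the same depth as $C$ by construction. The only genuine obstacle is the clean formulation and inductive proof of the composition identity; once that is settled, the remainder is a direct chase through the definitions, and the permutation-free setting means the witness can be taken to be $C$ unchanged.
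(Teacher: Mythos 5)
Your proposal is correct, but it takes a more elementary route than the paper. The paper does not prove this theorem directly at all: it defers to the more general Theorem~\ref{MinimalPermutationOutputSetTheorem} (the version with a permutation $\pi$ such that $\pi(S_A) \subseteq S_B$), whose proof routes through Lemma~\ref{lemma:PermutationSet}, generalized comparator networks, and Knuth's conversion result (Theorem~\ref{FloydKnuthGeneralizedTheorem}); the present theorem is then the specialization $\pi = \mathrm{id}$. Your argument strips out all of that machinery, which is indeed unnecessary in the identity case: you take the witness $C' = C$ explicitly, and the whole proof reduces to the composition identity $V_{X \oplus Y}(x) = V_Y(V_X(x))$ plus the observation that $C$ sorts everything in $S_B \supseteq S_A$. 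It is worth noting that this composition identity is also the tacit step in the paper's proof of the general theorem (the sentence ``we deduce that $V_C(\pi(x))$ is sorted \dots because $\pi(x) \in S_B$'' relies on exactly this), so you have made explicit something the paper leaves implicit. What your approach buys: a self-contained proof with a concrete witness, no appeal to generalized networks or to Knuth's replacement theorem (in the general permutation setting the witness genuinely cannot be $C$ itself --- there $\pi^{-1}(C)$ is only a generalized network and must be untangled into some other $C'$). What the paper's approach buys: economy, since the permutation version is needed anyway for the filter construction that drives the whole paper, and this theorem then comes for free as a corollary.
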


\begin{proof}
See proof of the more general Theorem~\ref{MinimalPermutationOutputSetTheorem}.
\end{proof}

Knuth~\cite{Knuth73} has shown that comparator networks are just as powerful as generalized comparator networks. He shows that the group of generalized comparator networks is closed under permutation. Intuitively, we would like to strengthen the result of Theorem~\ref{MinimalOutputSetTheorem} by considering permutations of output sets. Before we present this result, we need the following lemma to prove it.

\begin{lemma}
\label{lemma:PermutationSet}
Let $\pi \in \Pi_n$, $x \in I_n$, and $C$ be a comparator network such that $V_C(\pi(x))$ is sorted. Then $\pi(V_{\pi^{-1}(C)}(x))$ is sorted, where $\pi^{-1}(C)$ is a generalized comparator network.
\end{lemma}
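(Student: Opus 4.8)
The plan is to reduce the lemma to a single commutation identity relating the two ways of combining a permutation with a network: permuting the input versus relabelling the channels. First I would fix the conventions so that the vector action and the network relabelling are compatible, taking $\pi(v)$ to be the vector with $\pi(v)_k = v_{\pi^{-1}(k)}$ and $\pi^{-1}(C)$ to be the generalized network obtained by replacing every comparator $\langle i, j \rangle$ in each level of $C$ by $\langle \pi^{-1}(i), \pi^{-1}(j)\rangle$. Relabelling by the bijection $\pi^{-1}$ preserves the property that each channel occurs in at most one comparator per level, so $\pi^{-1}(C)$ is a well-formed generalized comparator network as the statement requires. With these conventions the target becomes the identity
\[
\pi\bigl(V_{\pi^{-1}(C)}(x)\bigr) = V_C(\pi(x)),
\]
from which the lemma is immediate: if the right-hand side is sorted, then so is the left-hand side.

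I would prove this identity by induction on the number of levels $d$ of $C$. The base case $d = 0$ is trivial, since both networks act as the identity and the claim reduces to $\pi(x) = \pi(x)$. For the inductive step, write $C = C' \oplus L$ with $C'$ having $d-1$ levels and $L$ a single level; by the induction hypothesis the intermediate states produced after $C'$ are related by $\pi$, so it suffices to establish the single-level claim that $\pi$ intertwines the effect of $\pi^{-1}(L)$ on a vector $u$ with the effect of $L$ on $\pi(u)$, where the one-level effect is the one specified by Definition~\ref{def:ComparatorNetworkEvaluation}.

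For this single-level claim I would argue comparator by comparator, using that the comparators in a level act on disjoint channels and that channels untouched by $L$ are merely relabelled by $\pi$. A min--max comparator $\langle i, j\rangle$ with $i<j$ places $\min(u_i, u_j)$ on channel $i$ and $\max(u_i, u_j)$ on channel $j$; its image $\langle \pi^{-1}(i), \pi^{-1}(j)\rangle$ in $\pi^{-1}(L)$ is a generalized comparator, and here the key point is that the generalized-comparator branches of Definition~\ref{def:ComparatorNetworkEvaluation} put the minimum on the first channel and the maximum on the second irrespective of whether $\pi^{-1}(i) < \pi^{-1}(j)$ or $\pi^{-1}(i) > \pi^{-1}(j)$. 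Tracing the values through both sides then shows that the value on channel $\pi^{-1}(i)$ of the output of $\pi^{-1}(L)$ on $u$, once moved by $\pi$ to position $i$, coincides with channel $i$ of the output of $L$ on $\pi(u)$, and symmetrically for $j$; the untouched channels match because $\pi^{-1}$ is a bijection.

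I expect the main obstacle to be precisely this bookkeeping in the single-level step: keeping the direction of the permutation straight, so that the vector action $\pi(\cdot)$ and the channel relabelling $\pi^{-1}(\cdot)$ genuinely cancel, and verifying that the order-reversing case $\pi^{-1}(i) > \pi^{-1}(j)$ is handled correctly by the generalized-comparator semantics. This is exactly the place where working with generalized rather than min--max comparators is essential, and it explains why $\pi^{-1}(C)$ must be permitted to be a generalized network in the statement.
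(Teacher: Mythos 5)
Your proof is correct, and it reaches the paper's central identity $\pi(V_{\pi^{-1}(C)}(x)) = V_C(\pi(x))$ by a genuinely different route. The paper works with the permutation-tracking function $P_C$ from Definition~\ref{def:ComparatorNetworkEvaluation}: it writes $P_C(\pi(x)) = \langle p_1, \dots, p_n \rangle$, ``applies $\pi^{-1}$'' to this equality to obtain an expression for $P_{\pi^{-1}(C)}(x)$, applies $\pi$ back to conclude $\pi(P_{\pi^{-1}(C)}(x)) = P_C(\pi(x))$, and then reads off the value identity from the correspondence between $P_C$ and $V_C$. The crucial equivariance step there --- that relabelling the channels of $C$ by $\pi^{-1}$ conjugates the induced permutation --- is asserted as a consequence of the definition rather than proven. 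Your structural induction on the number of levels, with the comparator-by-comparator check in the single-level case, supplies exactly the verification the paper leaves implicit; in particular you isolate the one fact that makes the whole thing work, namely that a generalized comparator $\langle a, b \rangle$ places the minimum on channel $a$ and the maximum on channel $b$ irrespective of whether $a < b$, so the order-reversing relabellings with $\pi^{-1}(i) > \pi^{-1}(j)$ are handled by the generalized branches of Definition~\ref{def:ComparatorNetworkEvaluation}. What the paper's route buys is brevity and a statement about $P_C$ that is reusable elsewhere; what yours buys is a self-contained, fully rigorous argument that does not lean on an unproven conjugation step, at the cost of the bookkeeping you anticipate. One further point in your favour: since the paper never fixes the vector action $\pi(v)_k = v_{\pi^{-1}(k)}$ or the meaning of $\pi^{-1}(C)$ explicitly, your decision to pin these conventions down first is not redundant --- it is needed for either argument to be checkable.
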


\begin{proof}
Let $x = \langle x_1, x_2, \dots, x_n \rangle$ and $P_C(\pi(x)) = \langle p_1, p_2, \dots, p_n \rangle$. Applying $\pi^{-1}$ to the equality yields that $P_{\pi^{-1}(C)}(\pi^{-1}(\pi(x))) = P_{\pi^{-1}(C)}(x) = \langle \pi^{-1}(p_{\pi^{-1}(1)}), \dots, \pi^{-1}(p_{\pi^{-1}(n)}) \rangle$. Applying $\pi$ to the equality yields $ \pi( P_{\pi^{-1}(C)}(x) ) = \langle p_1, p_2, \dots, p_n \rangle = P_C(\pi(x))$. From  Definition~\ref{def:ComparatorNetworkEvaluation} of the functions $V_C(x)$ and $P_C(x)$, we now have that $\pi( V_{\pi^{-1}(C)}(x) ) = V_C(\pi(x))$. From the hypothesis we know that $V_C(\pi(x))$ is sorted, hence we conclude that $\pi(V_{\pi^{-1}(C)}(x)) = V_C(\pi(x))$ is sorted.
\end{proof}

Theorem~\ref{MinimalOutputSetTheorem} tells us that if we can extend the comparator network $B$ to a sorting network by appending $l$ levels to it then we can extend any network $A$ such that $S_A \subseteq S_B$ by appending $l$ levels to it. We now extend this result by weakening the constraint $S_A \subseteq S_B$. We show that it is enough to find one permutation $\pi \in \Pi_n$ such that $\pi(S_A) \subseteq S_B$ to claim that if we can extend the comparator network $B$ to a sorting network by appending $l$ levels to it then we can extend any network $A$ by appending $l$ levels to it.

\begin{theorem}
\label{MinimalPermutationOutputSetTheorem}
Let $A$, $B$ and $C$ be $n$-input comparator networks, and $\pi \in \Pi_n$ such that $ \pi(S_{A}) \subseteq S_{B}$ and $B \oplus C$ is an $n$-input sorting network. Then there exists a comparator network $C'$ with the same depth as $C$ such that $A \oplus C'$ is an $n$-input sorting network.
\end{theorem}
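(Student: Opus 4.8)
The plan is to chain together Lemma~\ref{lemma:PermutationSet} with the generalized-network reduction of Theorem~\ref{FloydKnuthGeneralizedTheorem}, using the permutation $\pi$ to transport the hypothesis about $B$ into a statement about $A$ followed by a relabelled copy of $C$. First I would unfold what it means for $B \oplus C$ to be a sorting network. Since evaluation composes as $V_{B \oplus C}(y) = V_C(V_B(y))$, and $V_B(y)$ ranges over the output set $S_B$ as $y$ ranges over $I_n$, the network $C$ must sort every element of $S_B$; that is, $V_C(z)$ is sorted for all $z \in S_B$.

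Next I would push the inclusion $\pi(S_A) \subseteq S_B$ through this observation. Fix an arbitrary input $x \in I_n$. Then $V_A(x) \in S_A$, so $\pi(V_A(x)) \in S_B$, and therefore $V_C(\pi(V_A(x)))$ is sorted. Because outputs of comparator networks on binary inputs are again binary, $V_A(x) \in I_n$, so I may apply Lemma~\ref{lemma:PermutationSet} with the input $V_A(x)$ playing the role of $x$: from the fact that $V_C(\pi(V_A(x)))$ is sorted, the lemma yields that $\pi(V_{\pi^{-1}(C)}(V_A(x)))$ is sorted, where $\pi^{-1}(C)$ is a generalized comparator network of the same depth as $C$.

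The key reinterpretation is then to collapse the nested evaluation via the composition identity $V_{\pi^{-1}(C)}(V_A(x)) = V_{A \oplus \pi^{-1}(C)}(x)$, which follows from Definition~\ref{def:ComparatorNetworkEvaluation} by feeding the output of $A$ directly into $\pi^{-1}(C)$. Hence $\pi(V_{A \oplus \pi^{-1}(C)}(x))$ is sorted for every $x \in I_n$, and by Definition~\ref{SortingNetworkDefinition} this exhibits $A \oplus \pi^{-1}(C)$ as a \emph{generalized} sorting network, witnessed by the single permutation $\pi$.

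Finally, I would invoke Theorem~\ref{FloydKnuthGeneralizedTheorem}. The network $A \oplus \pi^{-1}(C)$ has depth $|A| + |C|$, and its first $|A|$ levels, inherited from $A$, consist only of min-max comparators, whereas the trailing $\pi^{-1}(C)$ block may be genuinely generalized. The theorem then produces an honest sorting network of the same size and depth that is identical to $A \oplus \pi^{-1}(C)$ on the first $|A|$ levels; this network is thus of the form $A \oplus C'$, where $C'$ is a min-max comparator network whose depth equals that of $\pi^{-1}(C)$, i.e. that of $C$, which completes the argument. The main obstacle is not any single computation but the bookkeeping: correctly recognising $V_C(\pi(V_A(x)))$ as an instance of the lemma's hypothesis, and ensuring the min-max prefix $A$ is preserved under the Floyd--Knuth conversion, so that the new suffix $C'$ is genuinely appended to $A$ itself rather than to some altered prefix.
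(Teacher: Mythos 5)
Your proof is correct and takes essentially the same route as the paper's: deduce from $\pi(S_A)\subseteq S_B$ that $C$ sorts every $\pi$-image of an output of $A$, apply Lemma~\ref{lemma:PermutationSet} to conclude $A \oplus \pi^{-1}(C)$ is a generalized sorting network, and finish with Theorem~\ref{FloydKnuthGeneralizedTheorem}. Your version is in fact slightly more careful than the paper's, since you make explicit the composition identity $V_{\pi^{-1}(C)}(V_A(x)) = V_{A\oplus\pi^{-1}(C)}(x)$ and the role of the min-max-prefix clause in preserving $A$ under the conversion, both of which the paper leaves implicit.
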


\begin{proof}
From the hypothesis we know that there exists $C$ such that $B \oplus C$ is an $n$-input sorting network. From $\pi (S_{A}) \subseteq S_{B}$ we deduce that $V_C(\pi(x))$ is sorted for all $x \in S_{A}$ because $\pi(x) \in S_{B}$. Applying Lemma~\ref{lemma:PermutationSet} to all $x \in S_{A}$ and $C$ we deduce that $\pi(V_{\pi^{-1}(C)}(x))$ is sorted. Hence $A \oplus \pi^{-1}(C)$ is a generalized $n$-input sorting network of depth $k$. Finally we apply Theorem~\ref{FloydKnuthGeneralizedTheorem} to the generalized sorting network $A \oplus \pi^{-1}(C)$ to show that there exists a comparator network $C'$ with the same depth as $C$ such that $A \oplus C'$ is an $n$-input sorting network.
\end{proof}

\begin{definition}
\label{def:min_pi}
Let $X$ be a set of output sets of $n$-input comparator networks. Define the set of all minimal representative output sets up to permutation of $X$ as $MinPi ( X ) = \{ S_A ~\lvert~ S_A \in X ~:~\nexists~ S_B \in X, \pi \in \Pi_n : B < A, \pi(S_B) \subseteq S_A  \}$, where by $B < A$ we denote the lexicographic order of networks, as described by Parberry~\cite{Parberry89}. Let the set of all output sets of $n$-input comparator networks of depth $d$ be defined as $G_{n,d}$. Let the set of all minimal representative output sets of $n$-input comparator networks of depth $d$ up to permutation be defined as $S_{n,d} = MinPi(G_{n,d})$.
\end{definition}

\begin{definition}
\label{def:filters}
The set $X_n$ of $n$-input comparator networks is a complete set of filters iff for any $n$-input sorting network of depth $d$ there exists one of the form $C : C'$ of depth $d$ for some $C \in X_n$. We would also denote the set of all complete sets of filters of that contain only $n$-input comparator networks with exactly $i$ levels as $F_{n, i} = \{ X_n ~|~ X_n ~is~a~complete~set~of~filters~and~ C \in X_n \implies |C| = i \}$ .
\end{definition}

\subsection{Known Non-Overlapping Theory}
\label{sec:background:theory:non_overlap}

\begin{definition}
\label{def:network:reflect}
Let $ x = \langle x_1, x_2, \dots, x_n \rangle \in I_n $ then $ \overline{x^R} = \langle \overline{x_n}, \overline{x_{n-1}}, \dots, \overline{x_1} \rangle $ where $x_i \in \{0, 1\}$ and $\overline{0} = 1$ and $\overline{1} = 0$.
Let $ L $ be a level the its reflection $L^R = \{ \langle n - j + 1, n - i + 1 \rangle  ~|~ \langle i, j \rangle \}$.
Let $ C = \langle L_1, L_2, \dots, L_d, n \rangle $ be a comparator network then its reflection $C^R = \langle L_1^R, L_2^R, \dots, L_d^R, n \rangle $.
\end{definition}

\begin{lemma}
\label{lemma:network:reflect}
Let $ C $ be a comparator network then $x \in S_C \iff \overline{x^R} \in S_{C^R} $.
\end{lemma}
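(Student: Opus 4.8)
The plan is to prove a single pointwise identity from which the biconditional falls out immediately. Write $\rho(y) = \overline{y^R}$ for the reflect-and-complement operation on a vector $y \in I_n$, so that $\rho(y)_i = \overline{y_{n-i+1}}$; this is exactly the operation appearing in the statement. I would show that applying the reflected network $C^R$ to $\rho(x)$ produces precisely $\rho$ of applying $C$ to $x$, namely
\[
V_{C^R}(\rho(x)) = \rho(V_C(x)) \quad \text{for every } x \in I_n .
\]
Granting this, the forward implication is direct: if $x = V_C(z) \in S_C$ for some $z \in I_n$, then $\overline{x^R} = \rho(x) = V_{C^R}(\rho(z)) \in S_{C^R}$, since $\rho(z) \in I_n$. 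For the converse I would observe that $\rho$ is an involution ($\rho(\rho(y)) = y$, since reversing coordinates twice restores their positions and $\overline{\overline{b}} = b$) and that reflection is an involution on networks, $(C^R)^R = C$; the backward direction is then just the forward direction applied to the network $C^R$ and the input $\rho(x)$.

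The heart of the matter is the identity $V_{C^R}(\rho(x)) = \rho(V_C(x))$, which I would prove by induction on the number of levels, tracking the vector of values after the first $k$ levels. Let $w_k(x) = \langle e_x(1,k)_1, \dots, e_x(n,k)_1 \rangle$ be the value-vector produced by the first $k$ levels of $C$, and let $w_k'$ be the analogous vector for $C^R$. The statement to induct on is $w_k'(\rho(x)) = \rho(w_k(x))$. The base case $k = 0$ is immediate, since $w_0(x) = x$ while $w_0'(\rho(x)) = \rho(x)$.

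For the inductive step, set $u = w_k(x)$, so that $w_k'(\rho(x)) = \rho(u)$ by hypothesis, and apply one further level. A comparator $\langle i, j\rangle \in L_{k+1}$ with $i < j$ becomes $\langle i', j'\rangle$ with $i' = n-j+1$, $j' = n-i+1$ in $L_{k+1}^R$ (and $i' < j'$, so it is again a min--max comparator). The two key facts are that the entries of $\rho(u)$ at the reflected channels are the complemented original entries, $\rho(u)_{i'} = \overline{u_j}$ and $\rho(u)_{j'} = \overline{u_i}$, and the De Morgan swap $\min(\overline{a},\overline{b}) = \overline{\max(a,b)}$, $\max(\overline{a},\overline{b}) = \overline{\min(a,b)}$. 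Carrying these through, the min--max action of $\langle i', j'\rangle$ on $\rho(u)$ reproduces entry by entry the $\rho$-image of the min--max action of $\langle i, j\rangle$ on $u$; channels untouched by the level are handled trivially, as $\rho$ acts entrywise up to the fixed index reversal. This advances the induction, and taking $k = d$ yields the identity.

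I expect the main obstacle to be purely bookkeeping rather than conceptual: keeping the index arithmetic $i \mapsto n-j+1$, $j \mapsto n-i+1$ consistent throughout and confirming that complementation exactly compensates for the min and max roles being exchanged under reflection. Two points warrant care. First, only the value component $e_x(\cdot,\cdot)_1$ is relevant to $S_C$, so the permutation component $e_x(\cdot,\cdot)_2$ can be ignored, and the tie case $u_i = u_j$ causes no difficulty at the level of values. Second, the converse must be obtained by invoking $(C^R)^R = C$ and the involution property of $\rho$ rather than by a separate induction, so I would isolate those two facts explicitly before combining them with the forward implication.
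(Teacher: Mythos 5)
Your proof is correct, and it is genuinely different in character from what the paper does: the paper offers no argument at all for this lemma, instead deferring entirely to Lemma~8 of Bundala and Codish \cite{BundalaCCSZ14_Optimal_Depth}. You supply a self-contained proof, and the structure is sound: the pointwise identity $V_{C^R}(\rho(x)) = \rho(V_C(x))$ is the right invariant, your index arithmetic checks out (with $i' = n-j+1$, $j' = n-i+1$ one gets $\rho(u)_{i'} = \overline{u_j}$, $\rho(u)_{j'} = \overline{u_i}$, and $i < j$ implies $i' < j'$, so reflected comparators remain min--max), the De~Morgan swap $\min(\overline{a},\overline{b}) = \overline{\max(a,b)}$ exactly compensates for the exchange of the min and max channels, and untouched channels pass through because reflection is a bijection on the comparators of a level. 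Deriving the converse from the two involutions $\rho \circ \rho = \mathrm{id}$ and $(C^R)^R = C$ rather than a second induction is also the clean way to do it, and your observation that only the value component $e_x(\cdot,\cdot)_1$ matters (so ties are harmless) disposes of the one place where the paper's evaluation semantics could cause confusion. What your approach buys is that the present paper becomes logically self-contained on this point; what the paper's citation buys is brevity and an explicit pointer to where the result first appeared. If you wanted to tighten your write-up further, the only step worth making fully explicit is the inductive step for channels untouched by the level, which you currently dispatch in one clause.
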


\begin{proof}
Refer to the proof of Lemma~8 in \cite{BundalaCCSZ14_Optimal_Depth} by Michael Codish.
\end{proof}

\begin{lemma}
\label{lemma:reflect}
Let $R_{n, i}$ be the set of minimal representative up to permutation and reflection itemsets within $G_{n}$. Then $R_{n, i} \in F_{n, i}$.
\end{lemma}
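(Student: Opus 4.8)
The plan is to unwind Definition~\ref{def:filters} directly, identifying each representative output set in $R_{n,i}$ with a depth-$i$ network that produces it. Fix an arbitrary $n$-input sorting network $N$ of depth $d \ge i$ and write $N = P \oplus Q$, where $P$ consists of its first $i$ levels, so that $|P| = i$ and $S_P \in G_{n,i}$. The goal becomes: exhibit some $C \in R_{n,i}$ together with a network $C'$ of depth $d-i$ for which $C \oplus C'$ is again a sorting network. Everything will reduce to a \emph{descent} on the lexicographic order $<$ of networks (Parberry~\cite{Parberry89}): I will repeatedly replace $S_P$ by the output set of a strictly smaller depth-$i$ network that still extends to a depth-$d$ sorting network, until a representative lying in $R_{n,i}$ is reached.

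The engine of the descent is a reflection-augmented version of Theorem~\ref{MinimalPermutationOutputSetTheorem}, which I would isolate first. Unwinding the definition of $R_{n,i}$ in analogy with $MinPi$ (Definition~\ref{def:min_pi}), membership $S_A \in R_{n,i}$ means that no strictly smaller depth-$i$ network $B$ admits $\pi \in \Pi_n$ with $\pi(S_B) \subseteq S_A$ or $\pi(S_{B^R}) \subseteq S_A$. The key lemma I need is: if $A,B$ are depth-$i$ networks and $\pi \in \Pi_n$ satisfies $\pi(S_B) \subseteq S_A$ or $\pi(S_{B^R}) \subseteq S_A$, and $A \oplus C$ is a sorting network, then $B \oplus C'$ is a sorting network for some $C'$ with $|C'| = |C|$. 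In the first case this is exactly Theorem~\ref{MinimalPermutationOutputSetTheorem} with the roles of $A$ and $B$ exchanged. For the reflection case I would apply Theorem~\ref{MinimalPermutationOutputSetTheorem} to $\pi(S_{B^R}) \subseteq S_A$ to obtain a sorting network $B^R \oplus C''$ with $|C''| = |C|$, and then reflect: by Definition~\ref{def:network:reflect} one has $(B^R \oplus C'')^R = B \oplus (C'')^R$ and $|(C'')^R| = |C''|$, so it remains only to know that the reflection of a sorting network is a sorting network.

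That last sub-fact is the one genuinely new ingredient, and I would derive it from Lemma~\ref{lemma:network:reflect}: for a sorting network $M$ we have $S_M = T_n$, and Lemma~\ref{lemma:network:reflect} gives $S_{M^R} = \{\,\overline{x^R} : x \in S_M\,\}$; a direct check shows that $x \mapsto \overline{x^R}$ maps $T_n$ bijectively onto $T_n$ (it sends $0^{k}1^{n-k}$ to $0^{n-k}1^{k}$), whence $S_{M^R} = T_n$ and $M^R$ is a sorting network. Together with the facts that reflection preserves depth and commutes with $\oplus$, this closes the reflection case of the combined lemma. The descent is then routine: starting from $M_0 = P$, as long as $S_{M_k} \notin R_{n,i}$ the defining property above supplies a network $B$ with $B < M_k$ and $\pi(S_B) \subseteq S_{M_k}$ or $\pi(S_{B^R}) \subseteq S_{M_k}$; I set $M_{k+1} = B$, and the combined lemma transfers ``extends to a depth-$d$ sorting network'' from $M_k$ to $M_{k+1}$ while keeping the appended depth equal to $d-i$. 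Since $<$ is a well-order on the finite set $G_{n,i}$, the chain $M_0 > M_1 > \cdots$ terminates at some $C$ with $S_C \in R_{n,i}$ and $C \oplus C'$ a sorting network of depth $d$; as $|C| = i$, this is precisely $R_{n,i} \in F_{n,i}$.

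I expect the main obstacle to be the reflection case of the combined lemma: one must thread $B \mapsto B^R$ through Theorem~\ref{MinimalPermutationOutputSetTheorem}, reflect the resulting sorting network back via Definition~\ref{def:network:reflect}, and verify that depth is preserved at every turn so that the final suffix genuinely has depth $d-i$. The permutation bookkeeping and the well-foundedness of the descent are comparatively mechanical, but the interaction of reflection with $\oplus$, with depth, and with the sorting property is where the care is needed; in particular the auxiliary claim that reflection sends sorting networks to sorting networks must be stated and proved, as it is not among the results quoted above.
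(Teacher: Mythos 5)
Your argument is correct, but there is nothing in the paper to compare it against: the paper's entire ``proof'' of Lemma~\ref{lemma:reflect} is a one-line deferral to Section~4.2 of \cite{BundalaCCSZ14_Optimal_Depth}. What you have produced is the self-contained derivation that the paper omits, assembled from its own machinery: the permutation case of your combined lemma is Theorem~\ref{MinimalPermutationOutputSetTheorem} verbatim (with the roles of $A$ and $B$ swapped), the reflection case threads through Definition~\ref{def:network:reflect} and Lemma~\ref{lemma:network:reflect} exactly as you describe, and the lexicographic descent (equivalently, a minimal-counterexample argument) over the finitely many depth-$i$ networks finishes it. You are also right that the auxiliary fact --- the reflection of a sorting network is a sorting network --- appears nowhere among the paper's stated results; your derivation ($S_M = T_n$ for a sorting network $M$, then $S_{M^R} = \{\, \overline{x^R} \mid x \in S_M \,\}$ by Lemma~\ref{lemma:network:reflect}, and $x \mapsto \overline{x^R}$ is an involution of $T_n$) is sound, as are the bookkeeping facts $(B^R)^R = B$, reflection distributing over $\oplus$, and preservation of depth and of the min-max property under reflection. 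Two caveats, both inherited from the paper rather than introduced by you: $R_{n,i}$ is never formally defined there, so your unwinding of ``minimal representative up to permutation and reflection'' as an extension of Definition~\ref{def:min_pi} is an interpretation (albeit clearly the intended one); and Definition~\ref{def:filters} silently presupposes $d \geq i$, an edge case you correctly make explicit. The net effect of your route is to show that the paper's internal results genuinely suffice to prove the lemma, making it self-contained instead of outsourced to the literature.
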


\begin{proof}
Refer to section~4.2 in \cite{BundalaCCSZ14_Optimal_Depth}.
\end{proof}

\begin{definition}
\label{def:omega}
The set of inputs $B \upharpoonright w = \{ \langle x_1, \dots, x_n \rangle \in I_n ~|~ x_1 = \dots x_l = 0, x_{n - r + 1} = \dots = x_{n} = 1, l + r = w \} $.
\end{definition}

\begin{remark}

Bundala~\cite{BundalaCCSZ14_Optimal_Depth} noted that if there exists an input set $B \in I_n$ for which no $n$-input sorting network of depth $d$ exists then there does not exist an $n$-input sorting network. The input sets considered by his method are of the form $B \upharpoonright w $,

Note, that $w$ is strongly dependant on $n$.
\end{remark}

\section{Algorithm Modification}

In this section we present a modification to the algorithm \cite{Marinov:ExtremalSets:Permutation} for finding minimal itemsets up to permutation that allows us to find the minimal up to permutation and reflection as per Definition~\ref{def:network:reflect} and Lemma~\ref{lemma:reflect}. The pseudo code of the modified version is presented in Algorithm~\ref{algo:min_pi_refl}.

\subsection{Detailed Description}

The first thing the algorithm does is given a dataset $F$ it makes sure that for every itemset $F_i \in F$ its reflection $F_i^R$ is also in $F$, if this is not the case then we add it to the input dataset $F$. We assume, that the version of the algorithm for finding minimal sets up to permutation returns an array of $|F|$ integers $ subset\_of $ such that  if the itemset $F_i$ is minimal up to permutation we have $subset\_of[i] = i$; otherwise there exists a permutation $\pi \in \Pi_n$ such that $ \pi(F[ subset\_of[i] ]) \subseteq F[i] $. This array gives us detailed information about which itemset is a subset (up to permutation) of another. We use this extra information to remove the ones which are non-minimal up to reflection from the ones which are minimal only up to permutation. We do this by iterating through the list of itemsets that are minimal up to permutation and for each itemset at index $i$ we find the index $reflect[i]$ of its reflected itemset within $F$. Then we traverse the $subset\_of[ reflect[i] ]$ until we reach the index of an itemset that is minimal up to permutation. If the index at which we arrived is smaller than the index  $i$ then we mark the minimal up to permutation itemset $F_i$ as non-minimal up to reflection because we choose the lexicographically (index-wise) smallest itemset as representative up to reflection. This modified algorithm is easily proven to return the minimal up to permutation and reflection itemsets within $F$ by using Definition~\ref{def:network:reflect} and Lemma~\ref{lemma:reflect}.

\subsection{Complexity Analysis}
The worst case time and space complexity of this modified algorithm are the same as the unmodified version because all we do is add an extra $\BigO{|F|}$ time and space to the existing approach. Hence the worst time and space complexity are $\BigO{ \frac{r \times n! \times ||F||}{P} }$ (using $P$ parallel threads) and $ \BigO{ ||F|| + r \times n^2 }$ respectively.

\section{Experimental Evaluation}
\label{sec:experiments}

\begin{figure}
\centering
\resizebox{\linewidth}{!}{%
\begin{tabular}{c|c||c|c|c|c|c|c|c|c|c|c|c|}

$ \mathbf{ n } $					& $5$ & $6$ & $7$ & $8$ & $ 9 $ & $ 10 $ & $ 11 $ & $ 12 $ & $ 13 $ & $ 14 $ & $ 15 $ & $ 16 $ \\
$ \mathbf{ d } $					& $4$ & $4$ & $5$ & $5$ & $ 6 $ & $ 6 $ & $ 7 $ & $ 7 $ & $ 8 $ & $ 8 $ & $ 8 $ & $ 8 $ \\
$ \mathbf{ \omega } $				& $ 2 $ & $ 2 $ & $ 2 $ & $ 3 $ & $ 3 $ & $ 4 $ & $ 4 $ & $ 5 $ & $ 3 $ & $ 4 $ & $ 7 $ & $ 7 $  \\

\hline

$ \mathbf{ |G_{n}| } $		& $ 26 $ & $ 76 $ & $ 232 $ & $ 764 $ & $ 2\,620 $ & $ 9\,496 $ & $ 35\,696 $ & $ 140\,152 $ & $ 568\,504 $ & $ 2\,390\,480 $ & $ 10\,349\,536 $ & $ 46\,206\,736 $  \\

\hline

$ \mathbf{ |R_{n, 1}| } $			& $ 1 $ & $ 1 $ & $ 1 $ & $ 1 $ & $ 1 $ & $ 1 $ & $ 1 $ & $ 1 $ & $ 1 $ & $ 1 $ & $ 1 $ & $ 1 $  \\

\hline

$ \mathbf{ |R_{n, 2}| } $			& $ 4 $ & $ 5 $ & $ 8 $ & $ 12 $ & $ 22 $ & $ 21 $ & $ 48 $ & $ 50 $ & $ 117 $  & $ 94 $ & $ 262 $ & $ 211 $  \\
$ \mathbf{ |R_{n, 2} \upharpoonright {\omega}| } $	& $ 3 $ & $ 2 $ & $ 3 $ & $ 6 $ & $ 13 $ & $ 12 $ & $ 20 $ & $ 24 $ & $ 103 $  & $ 66 $ & $ 83 $ & $ 200 $  \\

$ \mathbf{ \floor{ \frac{ |R_{n, 2}| }{ |R_{n, 2} \upharpoonright {\omega}| }   }  } $	& $ 1.33 $ & $ 2.5 $ & $ 2.67 $ & $ 2.00 $ & $ 1.69 $ & $ 1.75 $ & $ 2.4 $ & $ 2.08 $ & $ 1.14 $ & $ 1.42 $ & $ 3.16 $ & $ 1.06 $  \\

\hline

$ \mathbf{ |R_{n, 3}| } $			& $ 4 $ & $ 4 $ & $ 52 $ & $ 38 $ & $ 1\,554 $ & $ 3\,169 $ & $ 55\,722 $ & $ 117\,517 $ & $  $ & $  $ & $  $ & $  $  \\
$ \mathbf{ |R_{n, 3}^{\omega}| } $	& $ 4 $ & $ 4 $ & $ 27 $ & $ 55 $ & $ 685 $ & $ 971 $ & $ 12\,025 $ & $ 38\,758 $ & $ 2\,403\,835 $  & $  $ & $  $ & $  $  \\

$ \mathbf{ \floor{ \frac{ |R_{n, 2}| * |G_{n}| }{ |R_{n, 3}| }   }  } $	& $ 26 $ & $ 95 $ & $ 35.69 $ & $ 241.26 $ & $ 37.09 $ & $ 62.93 $ & $ 30.75 $ & $ 59.63 $ & $  $ & $  $ & $  $ & $  $  \\ 

$ \mathbf{ \floor{ \frac{ |R_{n, 2}| * |G_{n}| }{ |R_{n, 3}^{\omega} | }   }  } $	& $ 26 $ & $ 95 $ & $ 68.74 $ & $ 166.69 $ & $ 84.15 $ & $ 205.37 $ & $ 142.49 $ & $ 180.80 $ & $ 27.67 $  & $  $ & $  $ & $  $  \\

\hline 
\end{tabular}
}
\caption{ Experimental evaluation summary presenting the sizes of the number of networks whose outputs are minimal and representative up to permutation and reflection of depths one, two and three for $n \leq 17$. The rows $ \mathbf{ \floor{ \frac{ |R_{n, 2}| * |G_{n}| }{ |R_{n, 3}| }   }  } $	 and $ \mathbf{ \floor{ \frac{ |R_{n, 2}| * |G_{n}| }{ |R_{n, 3}^{\omega} | }   }  } $ demonstrates the expected speedup of the existing \cite{BundalaCCSZ14_Optimal_Depth} algorithm for finding sorting networks of optimal depth by fixing the first three layers, rather than only the first two. The row $ \mathbf{ |R_{n, 2} \upharpoonright {\omega}| } $ shows the necessary second layer networks that Bundala's approach needs to consider when proving that no $n$-input sorting network of depth $d$ exists. }
\label{fig:experiments}
\end{figure}

We have summarized the results of our experiments in Figure~\ref{fig:experiments}. We show the sizes of the sets $R_{n, 1}$, $R_{n, 2}$ for all $n < 17$ and $R_{n, 3}$ for all $n < 14$. The sizes of $R_{n, 1}$ and $R_{n, 2}$ match exactly to the ones presented by Bundala~\cite{BundalaCCSZ14_Optimal_Depth} whereas the set $R_{n, 3}$ is novel to the work presented in this paper. The row $\mathbf{ \floor{ \frac{ |R_{n, 2}| * |G_{n}| }{ |R_{n, 3}| } } }$ presents the expected speedup of Bundala's algorithm if the first three layers are to be fixed rather than only the first two --- as is described in \cite{BundalaCCSZ14_Optimal_Depth}. This is technique is most useful when using Bundala's method to find the satisfiable instances when checking if an $n$-input sorting network of depth $d + 1$ exists; i.e. for $n = 12$ we do expect Bundala's program to execute about $60$ times faster to determine that there exists a sorting network of depth $d + 1 = 8$.

Bundala~\cite{BundalaCCSZ14_Optimal_Depth} noted that if there exists an set of inputs $B \in I_n$ for which no $n$-input sorting network of depth $d$ exists then there does not exist an $n$-input sorting network. The input sets considered by his method are of the form $B \upharpoonright w $, recall from Definition~\ref{def:omega}. In Figure~\ref{fig:experiments} we present the row $ \mathbf{ |R_{n, 2} \upharpoonright {\omega}| } $ which presents the minimal up to permutation and reflection outputs of depth two that are restricted to be of that certain form. We use these when proving that an $n$-input sorting network of depth $d$ does not exist; i.e. Bundala's approach needs to consider only $ \mathbf{ |R_{n, 2} \upharpoonright {\omega}| } $ set of fixed two-layered networks. Moreover, when generating the third layer $R_{n,3}^w$ is achieved by applying all levels to the itemsets from $ \mathbf{ |R_{n, 2} \upharpoonright {\omega}| } $. Hence, adapting the Bundala's algorithm to fix the first three layers, we would expect a speedup factor of $ \mathbf{ \floor{ \frac{ |R_{n, 2}| * |G_{n}| }{ |R_{n, 3}^{\omega} | }   }  } $ for the unsatisfied instances; i.e. for $n = 12$ we expect the modified version of his algorithm to find that no sorting network of depth seven exists about $180$ times faster by fixing the first three layers with $w = 5$.

\subsection{Environment Setup}
In all of the conducted experiments we used a computer with four Intel Xeon CPU E7- 4820 processors. Each CPU has 8 cores clocked at 2.00GHz, equipped with 8MB of third level cache and 128GB of main memory. Note that our experiments investigate the case when the entire data structure fits in main memory.

\subsection{Implementation Verification}
The correctness of our program was verified by calculating $R_{n, 2}$ and comparing to existing results \cite{CodishCS14:Two_Layer_Prefix} \cite{BundalaCCSZ14_Optimal_Depth}  (referred to as $R_n$). We have verified this for all $n < 17$.

\section{Conclusion and Future Work}

This paper has extended the knowledge of the structure of comparator networks when constrained to the problem of finding minimal depth sorting networks. The current state of the art algorithm for finding optimal depth sorting networks fixes the first two layers, formulates the problem as a SAT encoding and then uses an existing SAT solver to find the answer. Using the work presented in this paper, we can fix the first three layers of a comparator network and then construct the SAT encoding. In the presented experiments we managed to find the three layer networks for all $n \leq 13$, where for $n = 12$ we do expect Bundala's algorithm to execute around $180$ times faster when the first three layers are fixed in comparison to when only the first two are fixed.

For future work, we would like to improve the memory usage of the algorithm for finding minimal itemsets up to permutation and reflection, as currently it requires the whole dataset to fit into main memory. This is the primary reason why we do not present results for any $n \geq 17$, as we had access to a machine with only $128GB$ of main memory.

\section{Acknowledgements}


Work supported by the Irish Research Council (IRC).

\bibliographystyle{elsarticle-num}
\bibliography{Third_Level}


\begin{algorithm} [H]
\SetAlgoNoLine

\KwIn{ Dataset $F = \{ F_0, F_1, \dots, F_{r-1} \}$ over the domain $D = \{d_1, d_2,\dots, d_n\}$ and the degree of parallelism $P$ }

\KwOut{ The minimal itemsets within the dataset $F$ up to permutation of $D$. i.e. $Min_{\pi}(F)$ }

{
	\tcc{ We add the reflections of the itemsets that are missing from the dataset $F$. }
	\nl $F \longleftarrow F \bigcup F^R$\;

	\tcc{ Remember the reflect indexes. }
	\For{ $ i \longleftarrow 0$ \KwTo $ |F| $ }
	{
		\nl $reflect[i] \longleftarrow i'$ such that $F[i]^R = F[i']$\;
	}

	\nl $ subset\_of \longleftarrow Find-Min-Rep-Perm(F, P)$\;

	\nl \For{ $ i \longleftarrow 0$ \KwTo $ |F| $ }
	{
		\nl $is\_min\_pi[i] \longleftarrow false$\;
		
		\tcc{ Check if $F[i]$ is minimal over $F$ up to permutation. }
		\nl \If{$subset\_of[i] = i $}
		{
			\tcc{We set the itemset $F_i$ as minimal up to permutation.}
			\nl $is\_min\_pi[i] \longleftarrow true$\;
			
			\tcc{ We start with the reflection of $F[i]$ and work our way following the $subset\_of$ path to a minimal up to permutation itemset. }
			$item \longleftarrow reflect[i] $
			\nl \While{$ \subset\_of[item] \neq item $}
			{
				\nl $item \longleftarrow subset\_of[item]$\;
			}

			\nl $is\_min\_refl[i] \longleftarrow true $\;
			
			\nl \If{$ item < i $}
			{
				\tcc{We set the itemset $F_i$ as non-minimal up to reflection, because we choose the lexicographically smallest 
				(i.e the one with a least index in the dataset) to be representative up to reflection.}
				\nl $is\_min\_refl[i] \longleftarrow false $\;
			}
		}
	}
	\nl \KwRet $ \{ F_i \in F$ $\lvert$ $is\_min\_pi[i] \And is\_min\_refl[i] \}$\;
}
\caption{Pseudo code for finding the minimal up to permutation itemsets $M$ of the input dataset $F = \{  F_0, F_1, \dots, F_{r-1} \}$ using $T$ threads, where every $F_i \in F$ is an itemset over the domain $D = \{  d_1, d_2, \dots, d_n \}$. We present a subroutine Find-Min-Rep-Perm which identifies the minimal representative itemsets of $F$ using $T$ parallel threads. It is important to note that in the Thread-Functor subroutine the variables $index$ and $is\_min$ are passed to the  by reference, meaning that they are shared between threads.}
\label{algo:min_pi_refl}
\end{algorithm}

\end{document}